\documentclass[11pt]{article}
\usepackage[T1]{fontenc}
\usepackage[utf8]{inputenc}
\usepackage{lmodern,amsmath,amssymb,amsthm,mathtools,booktabs}
\usepackage[margin=1in]{geometry}
\usepackage{microtype}
\usepackage[hidelinks]{hyperref}
\hypersetup{pdftitle={Bivariate Bicycle Codes over Group Algebras},pdfsubject={Weighted-shift BBGA codes, nonabelian realizations, and inequivalence to ordinary 2BGA codes}}
\newtheorem{theorem}{Theorem}[section]
\newtheorem{proposition}[theorem]{Proposition}
\newtheorem{lemma}[theorem]{Lemma}
\theoremstyle{definition}
\newtheorem{definition}[theorem]{Definition}
\theoremstyle{remark}
\newtheorem{remark}[theorem]{Remark}
\newcommand{\F}{\mathbb F_2}
\newcommand{\Z}{\mathbb Z}
\DeclareMathOperator{\wt}{wt}
\DeclareMathOperator{\rk}{rank}
\DeclareMathOperator{\row}{row}
\DeclareMathOperator{\Aut}{Aut}
\DeclareMathOperator{\supp}{supp}
\newcommand{\BBGA}{\operatorname{BBGA}}

\begin{document}
\begin{footnotesize}
\begin{center}
 {\bf Bivariate Bicycle Codes over Group Algebras}\\
{Chaobin Liu}\footnote{Department of Natural Sciences $\&$ Mathematics, Bowie State University, MD, USA\\ Email: cbliu2000@yahoo.com}
\end{center}
\begin{abstract}
We study a weighted-shift formulation of bivariate bicycle group-algebra
(BBGA) codes over finite group algebras. One binary shift acts by right
multiplication and the other by left multiplication; associativity ensures
that these shifts commute even for a nonabelian seed group. Bivariate
polynomials in the shifts, including mixed monomials, therefore define CSS
codes. We give explicit sparsity bounds and distinguish nonabelian seed
groups from codes that are intrinsically inequivalent to abelian
realizations. For $D_3$, a pair of weighted permutation shifts generates a
regular $C_6\times C_{12}$ action and recovers the known $[[144,12,12]]$
and $[[144,14,14]]$ bivariate bicycle codes. We also present a
computer-verified $[[144,16,12]]$ code over $A_4\times C_6$, with every
check of weight eight. This example belongs to the established nonabelian
two-block group-algebra family. Exhaustive low-weight enumeration and an
intrinsic stabilizer-support invariant certify that it is inequivalent to
any ordinary abelian BB or two-block group-algebra realization, even under
qubit permutations combined with local Clifford operations. Finally, a
nonuniform weighted-shift construction over $\mathbb{F}_2[C_3]$ yields a
$[[18,4,3]]$ code with check weights six and eight. An exact enumeration
of its minimum stabilizer supports rules out every ordinary binary 2BGA
realization at that length, under the same equivalence. This establishes
that the unrestricted weighted-shift formulation extends beyond the
ordinary 2BGA class, even with an abelian seed group.
\end{abstract}

\section{Introduction}
Quantum low-density parity-check (qLDPC) codes combine sparse stabilizer
checks with favorable rate and distance scaling. Product constructions
provide several important families; see Ref.~\cite{BE2021} for a review.
At finite blocklength, bivariate bicycle (BB) codes offer compact algebraic
descriptions and sparse checks. The $[[144,12,12]]$ gross code is a prominent
example with weight-six checks~\cite{BCG2024}. Covering-graph constructions
also yield weight-eight BB codes, including a $[[144,14,14]]$
example~\cite{SRB2025}.

For commuting binary square matrices $A$ and $B$, the two-block form
\begin{equation}\label{eq:css}
 H_X=\begin{bmatrix}A&B\end{bmatrix},\qquad
 H_Z=\begin{bmatrix}B^{\mathsf T}&A^{\mathsf T}\end{bmatrix}
\end{equation}
satisfies $H_XH_Z^{\mathsf T}=AB+BA=0$. In BB codes, $A$ and $B$ are
polynomials in commuting cyclic translations. Group-algebra constructions
extend this mechanism through commuting left and right multiplication
operators. This mechanism, including its use for nonabelian groups, is
already part of the theory of two-block group-algebra (2BGA)
codes~\cite{WLP2023,LP2024}.

Here we formulate weighted cyclic shifts over $\F[G]$ and evaluate
bivariate polynomials in their commuting binary lifts. We use the term
\emph{BBGA} for this formulation. When $p=q=1$, $A=\rho(a)$ and
$B=\lambda(b)$, it specializes to an ordinary 2BGA code. Thus neither the
left--right commutation identity nor this specialization is claimed as a
new code family. The formulation permits sparse sums as shift labels and
makes explicit the role of mixed monomials and permutation actions.

Our main distinction is between a nonabelian \emph{description} and
inequivalence of the resulting \emph{code} to an abelian realization.
Group-element labels give commuting permutation shifts, which generate an
abelian permutation group. We demonstrate this reduction for $D_3$ and
recover two known BB codes. We then give an $A_4\times C_6$ example with
parameters $[[144,16,12]]$ and weight-eight checks. A computer-assisted
argument using all minimum-weight stabilizer supports rules out every
ordinary abelian 2BGA realization. This does not assert novelty relative
to the entire nonabelian 2BGA literature or an advantage in decoding or
syndrome-extraction performance.

A further distinction concerns the scope of the weighted-shift formulation
relative to the entire ordinary 2BGA class. In
Section~\ref{sec:weighted18}, we give a $[[18,4,3]]$ example with all seeds
nonzero, check weights at most eight, and minimum stabilizer weight six.
Its 38 minimum stabilizer supports violate a divisibility condition imposed
by every possible ordinary 2BGA realization of length 18. Unlike the
$A_4\times C_6$ example, which remains an ordinary nonabelian 2BGA code,
this weighted-shift example lies outside the ordinary 2BGA class altogether.
The result establishes a strict extension up to qubit permutations and
local Clifford operations, rather than a claim of optimal code parameters.

The remainder of the paper is organized as follows.
Section~\ref{sec:construction} defines the BBGA construction and establishes
its commutation, CSS orthogonality, and sparsity properties.
Section~\ref{sec:abelian} analyzes commuting permutation shifts and presents
the $D_3$ realizations of known BB codes.
Section~\ref{sec:nonabelian} gives the $[[144,16,12]]$ nonabelian example
and its certificate of inequivalence to ordinary abelian 2BGA codes.
Section~\ref{sec:weighted18} presents the $[[18,4,3]]$ weighted-shift code
and proves its inequivalence to every ordinary binary 2BGA realization.
Section~\ref{sec:conclusion} summarizes the results and outlines further
research directions. Appendix~\ref{app:reproduce} specifies the coordinate
conventions and provides instructions for reproducing the computations.

\section{The BBGA construction}\label{sec:construction}
\subsection{Group algebra and multiplication conventions}
Let $G$ be a finite group of order $\ell$, with identity $e$. Its binary
group algebra is
\[
 \mathcal R=\F[G]=\left\{\sum_{h\in G}z_hh:z_h\in\F\right\}.
\]
Addition is coefficientwise and multiplication is extended bilinearly from
$G$. Fix a group ordering and identify $\mathcal R$ with column vectors in
$\F^\ell$. Define binary matrices $\lambda(z)$ and $\rho(z)$ by
\begin{equation}\label{eq:regular}
 \lambda(z)e_h=\sum_{g\in G}z_ge_{gh},\qquad
 \rho(z)e_h=\sum_{g\in G}z_ge_{hg}.
\end{equation}
Thus $\lambda(z)u=zu$ and $\rho(z)u=uz$. With this column-vector
convention,
\[
 \lambda(z)\lambda(w)=\lambda(zw),\qquad
 \rho(z)\rho(w)=\rho(wz).
\]
In particular, right multiplication is an antihomomorphism, or a
representation of the opposite algebra. Associativity gives
\begin{equation}\label{eq:LR}
 \lambda(z)\rho(w)=\rho(w)\lambda(z)\qquad(z,w\in\mathcal R),
\end{equation}
since $z(uw)=(zu)w$. This identity does not require $G$ to be abelian.

\subsection{Weighted shifts and binary lifts}
For $m\geq1$, let $S_m$ be the left cyclic shift with
$(S_m)_{i,j}=1$ exactly when $j=i+1\pmod m$, using indices in $\Z_m$.
For $z_0,\ldots,z_{m-1}\in\mathcal R$, let
$S_m(z_0,\ldots,z_{m-1})$ have entry $z_i$ at $(i,i+1)$ and zero elsewhere.
Choose $p,q\geq1$ and seeds
\[
 \boldsymbol a=(a_0,\ldots,a_{p-1}),\qquad
 \boldsymbol b=(b_0,\ldots,b_{q-1})
\]
in $\mathcal R$, and set
\[
 \widehat P=S_p(\boldsymbol a)\otimes I_q,\qquad
 \widehat Q=I_p\otimes S_q(\boldsymbol b).
\]
These matrices lie in $M_{pq}(\mathcal R)$. They commute when $G$ is
abelian, but need not commute otherwise.

Set $N=pq\ell$. Replace the entries of $\widehat P$ by their right
multiplication matrices and those of $\widehat Q$ by their left
multiplication matrices. Explicitly, the $\ell\times\ell$ blocks are
\begin{align}
 P_{(i,j),(i',j')}&=\delta_{i',i+1}\delta_{j',j}\,\rho(a_i),\label{eq:P}\\
 Q_{(i,j),(i',j')}&=\delta_{i',i}\delta_{j',j+1}\,\lambda(b_j),\label{eq:Q}
\end{align}
where the first and second coordinates are reduced modulo $p$ and $q$,
respectively. Both $P$ and $Q$ are binary $N\times N$ matrices.

\begin{proposition}\label{prop:commute}
For arbitrary group-algebra seeds, $PQ=QP$.
\end{proposition}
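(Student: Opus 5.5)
The plan is to compute both products blockwise from the explicit block descriptions \eqref{eq:P} and \eqref{eq:Q}, and compare them using the left--right commutation identity \eqref{eq:LR}. View $P$ and $Q$ as $pq\times pq$ block matrices with $\ell\times\ell$ blocks, indexed by pairs $(i,j)\in\Z_p\times\Z_q$. Block multiplication then gives
\[
 (PQ)_{(i,j),(i'',j'')}=\sum_{(i',j')}P_{(i,j),(i',j')}\,Q_{(i',j'),(i'',j'')}.
\]

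First I will evaluate this sum. The factor $P_{(i,j),(i',j')}$ is nonzero only when $i'=i+1$ and $j'=j$. The factor $Q_{(i',j'),(i'',j'')}$ is nonzero only when $i''=i'$ and $j''=j'+1$. Hence exactly one intermediate index survives, and
\[
 (PQ)_{(i,j),(i'',j'')}=\delta_{i'',i+1}\,\delta_{j'',j+1}\,\rho(a_i)\lambda(b_j).
\]

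Next I will do the same for $QP$. The $Q$ factor forces the intermediate index $(i,j+1)$, and the $P$ factor then forces $(i+1,j+1)$. Because $Q$ only changes the second coordinate, the $P$ block that is picked up is still $\rho(a_i)$, and the $Q$ block is still $\lambda(b_j)$. This gives
\[
 (QP)_{(i,j),(i'',j'')}=\delta_{i'',i+1}\,\delta_{j'',j+1}\,\lambda(b_j)\rho(a_i).
\]

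Finally, \eqref{eq:LR} gives $\rho(a_i)\lambda(b_j)=\lambda(b_j)\rho(a_i)$ for arbitrary $a_i,b_j\in\mathcal R$. So all blocks agree and $PQ=QP$.

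The only point requiring care is checking that the seed attached to each shift depends only on that shift's own coordinate. The $P$ block depends on $i$ alone and is unchanged by the $j$-shift, and symmetrically for $Q$. This is why no reordering of the seeds is needed, unlike for $\widehat P\widehat Q$ over a noncommutative $\mathcal R$. There, the entries multiply as algebra elements $a_ib_j$ versus $b_ja_i$. In the lift, by contrast, they become $\rho(a_i)$ and $\lambda(b_j)$, which commute by associativity. The wrap-around cases $i=p-1$ and $j=q-1$ need no separate treatment, since all indices are taken modulo $p$ and $q$, including the degenerate cases $p=1$ or $q=1$.
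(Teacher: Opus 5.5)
Your proposal is correct and is essentially the paper's own proof. In each product, the unique nonzero block in row $(i,j)$ lies in column $(i+1,j+1)$ and equals $\rho(a_i)\lambda(b_j)$ or $\lambda(b_j)\rho(a_i)$, and these agree by \eqref{eq:LR}; your explicit check that each seed depends only on its own shift's coordinate is the point the paper's computation uses implicitly.
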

\begin{proof}
The only potentially nonzero block in row $(i,j)$ of either product is
in column $(i+1,j+1)$. In $PQ$ this block is
$\rho(a_i)\lambda(b_j)$; in $QP$ it is
$\lambda(b_j)\rho(a_i)$. They agree by~\eqref{eq:LR}.
\end{proof}

\subsection{Polynomial matrices and CSS parameters}
Let $\mathcal S_A,\mathcal S_B\subset\Z_{\geq0}^2$ be finite sets and define
\begin{equation}\label{eq:polynomials}
 A=f(P,Q)=\sum_{(u,v)\in\mathcal S_A}P^uQ^v,\qquad
 B=g(P,Q)=\sum_{(u,v)\in\mathcal S_B}P^uQ^v.
\end{equation}
All sums and matrix operations are over $\F$. A monomial with $u,v>0$
is called mixed. Restricting to terms with $u=0$ or $v=0$ gives the
pure-power case. If both shifts are invertible, Laurent polynomials with
integer exponents are also allowed.

\begin{definition}
The BBGA code $\BBGA_{G,p,q}(\boldsymbol a,\boldsymbol b;f,g)$ is the
CSS code defined by~\eqref{eq:css} and~\eqref{eq:polynomials}. The first
argument of $f$ and $g$ is always evaluated at $P$, and the second at $Q$.
\end{definition}

\begin{theorem}
The BBGA construction defines a valid CSS code of length $n=2pq|G|$.
\end{theorem}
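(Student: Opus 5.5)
The plan is to reduce the statement to the commutation identity of Proposition~\ref{prop:commute}. Everything else is bookkeeping about dimensions and about working over $\F$.

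First, we fix the sizes. The matrices $P$ and $Q$ are binary $N\times N$ matrices with $N=pq\ell$ and $\ell=|G|$. Hence $A=f(P,Q)$ and $B=g(P,Q)$ in~\eqref{eq:polynomials} are binary $N\times N$ matrices. Then $H_X=[A\;B]$ and $H_Z=[B^{\mathsf T}\;A^{\mathsf T}]$ in~\eqref{eq:css} each have $N$ rows and $2N=2pq|G|$ columns. Identifying columns with qubits therefore gives $n=2pq|G|$. The rows of $H_X$ define $X$-type stabilizers and the rows of $H_Z$ define $Z$-type stabilizers.

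Second, we verify the CSS condition $H_XH_Z^{\mathsf T}=0$ over $\F$. Block multiplication gives
\[
 H_XH_Z^{\mathsf T}=\begin{bmatrix}A&B\end{bmatrix}\begin{bmatrix}B\\ A\end{bmatrix}=AB+BA .
\]
It therefore suffices to show that $AB=BA$, since then $AB+BA=2AB=0$ in characteristic two.

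By Proposition~\ref{prop:commute}, $PQ=QP$. Consequently the subalgebra of $M_N(\F)$ generated by $P$ and $Q$ is commutative: every monomial $P^uQ^v$ commutes with every $P^{u'}Q^{v'}$, because
\[
 P^uQ^vP^{u'}Q^{v'}=P^{u+u'}Q^{v+v'}=P^{u'}Q^{v'}P^uQ^v .
\]
This follows by induction, moving single factors of $Q$ past $P$. Expanding bilinearly, $f(P,Q)$ and $g(P,Q)$ commute, and this holds for mixed monomials as well.

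In the Laurent case, invertibility of $P$ and $Q$ together with $PQ=QP$ gives $P^{-1}Q=QP^{-1}$ (multiply $PQ=QP$ on both sides by $P^{-1}$), and similarly for $Q^{-1}$. The same argument then covers integer exponents.

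Finally, orthogonality of the $X$ and $Z$ checks means the corresponding Pauli operators commute, since their symplectic inner product is the $(i,j)$ entry of $H_XH_Z^{\mathsf T}$. This yields a valid CSS stabilizer group. No step is a real obstacle. The only points needing care are that commutativity of $P$ and $Q$ must come from Proposition~\ref{prop:commute} rather than from abelianness of $G$, and the handling of negative exponents.
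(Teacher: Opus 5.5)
Your proposal is correct and follows the paper's proof: Proposition~\ref{prop:commute} makes all monomials in $P,Q$ commute, so $AB=BA$ and $H_XH_Z^{\mathsf T}=AB+BA=0$ over $\F$. Your added bookkeeping for the length $2N=2pq|G|$ and for Laurent exponents is accurate, and it spells out details the paper's proof leaves implicit.
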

\begin{proof}
By Proposition~\ref{prop:commute}, every two monomials in $P,Q$ commute.
Hence $AB=BA$, and~\eqref{eq:css} gives
$H_XH_Z^{\mathsf T}=AB+BA=0$ in characteristic two.
\end{proof}

Its dimension is
\begin{equation}\label{eq:dimension}
 k=2pq|G|-\rk_{\F}H_X-\rk_{\F}H_Z.
\end{equation}
For $k>0$, identify row vectors with their transposes when comparing row
spaces and kernels, and define
\begin{align}
 d_X&=\min\{\wt(v):v\in\ker H_Z\setminus\row H_X\},\label{eq:dx}\\
 d_Z&=\min\{\wt(v):v\in\ker H_X\setminus\row H_Z\}.\label{eq:dz}
\end{align}
Then $d=\min(d_X,d_Z)$. Excluding the stabilizer row space is essential:
low-weight vectors in a parity-check kernel may be stabilizers rather
than logical operators.

\subsection{Sparsity}
For $z\in\mathcal R$, write $|z|=|\{g:z_g=1\}|$, and set
$\alpha=\max_i|a_i|$ and $\beta=\max_j|b_j|$. Every row and column of $P$
has weight at most $\alpha$, and every row and column of $Q$ has weight at
most $\beta$. The maximum row and column weights of a product are bounded
by the products of the corresponding maxima. Consequently, for
nonnegative exponents, define
\[
 w_A=\sum_{(u,v)\in\mathcal S_A}\alpha^u\beta^v,\qquad
 w_B=\sum_{(u,v)\in\mathcal S_B}\alpha^u\beta^v,
\]
with zeroth powers interpreted as one. Every check has weight at most
$w_A+w_B$, and each qubit participates in at most $w_A+w_B$ checks of
each type. Cancellations can lower these bounds. A family is LDPC if the
row and column weights remain uniformly bounded; CSS orthogonality alone
does not imply this property. Inverses of sparse matrices may be dense,
so Laurent-polynomial instances require a separate sparsity check.

If every seed is a single group element, all monomials are permutation
matrices. The check-weight bound then simplifies to
$|\mathcal S_A|+|\mathcal S_B|$. Equality holds when the monomials within
each defining matrix have disjoint supports. Distinct permutation matrices
need not have disjoint supports in general; distinct translations in a
regular action do.

\section{Permutation shifts and abelian realizations}\label{sec:abelian}
\begin{proposition}\label{prop:orbits}
Suppose all seeds are group elements. The permutation group
$K=\langle P,Q\rangle$ is abelian. On each of its coordinate orbits, the
induced faithful action is regular and abelian. If $K$ is transitive, every
BBGA code built from these shifts is permutation-equivalent to an ordinary
abelian 2BGA code.
\end{proposition}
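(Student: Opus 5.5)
When every seed is a single group element, $P$ and $Q$ are permutation matrices, because $\rho(g)$ and $\lambda(h)$ are permutation matrices for $g,h\in G$. By Proposition~\ref{prop:commute} they commute. A group generated by two commuting elements is abelian, so $K=\langle P,Q\rangle$ is abelian. This settles the first claim.

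For the orbit claim, I would use the standard fact that a transitive abelian permutation group acts regularly. Fix an orbit $O$ of $K$ and let $\bar K$ be the image of $K$ in $\mathrm{Sym}(O)$. This image is faithful on $O$, transitive and abelian. If $k\in\bar K$ fixes some $x\in O$, then for any $y=hx$ with $h\in\bar K$ we get $ky=khx=hkx=hx=y$. Hence $k$ is the identity of $\bar K$. So stabilizers are trivial, and transitivity together with trivial stabilizers gives regularity. This is the key step, but it is routine.

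For the equivalence claim, assume $K$ is transitive on all $N=pq\ell$ coordinates. The previous step shows $K$ acts regularly, so $|K|=N$. Fix a base coordinate $x_0$ and use the bijection $\phi:K\to\{\text{coordinates}\}$, $k\mapsto kx_0$. Let $\Pi$ be the corresponding permutation matrix, which identifies $\F^N$ with $\F[K]$. Under this identification, the action of $k'\in K$ on coordinates becomes $\phi(k)\mapsto\phi(k'k)$, which is left multiplication by $k'$. Because $K$ is abelian, this also equals right multiplication. Therefore $\Pi^{-1}P\Pi=\lambda_K(P)$ and $\Pi^{-1}Q\Pi=\lambda_K(Q)$, where $\lambda_K$ denotes the left regular representation of $\F[K]$ and $P,Q$ are regarded as elements of $K$.

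Extending linearly, the polynomials conjugate as well:
\[
\Pi^{-1}A\Pi=\lambda_K(a'),\qquad \Pi^{-1}B\Pi=\lambda_K(b'),
\]
with $a'=f(P,Q)$ and $b'=g(P,Q)$ taken as elements of $\F[K]$. Laurent exponents also work, since inverses of group elements lie in $K$. In an abelian group algebra $\lambda_K(b')=\rho_K(b')$, so $H_X\,\mathrm{diag}(\Pi,\Pi)=\Pi\,[\lambda_K(a')\ \rho_K(b')]$. Left multiplication by the invertible matrix $\Pi$ does not change the row space.

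The same argument applies to $H_Z=[B^{\mathsf T}\ A^{\mathsf T}]$. The matrix $\Pi$ is orthogonal, and transposes map to $\lambda_K$ of the involuted elements. Applying the qubit permutation $\mathrm{diag}(\Pi,\Pi)$ to both check matrices therefore gives exactly the ordinary abelian 2BGA code over $\F[K]$ with $a=a'$ and $b=b'$.

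The only real point of care is matching the paper's 2BGA convention, namely which of the two blocks uses $\rho$ and which uses $\lambda$. In the abelian case $\lambda_K=\rho_K$, so this causes no trouble.
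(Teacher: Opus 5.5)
Your proof is correct and follows the paper's argument. Commuting permutation generators give an abelian $K$; a point stabilizer fixes its whole orbit, so the faithful action on each orbit is regular; and in the transitive case, labeling coordinates by $K$ turns $P$, $Q$, and hence $A$, $B$, into regular translations, with the same relabeling applied to both blocks. Your explicit conjugation by $\Pi$, and your treatment of $H_Z$ via $\Pi^{\mathsf T}=\Pi^{-1}$ and the involution, spell out details the paper leaves implicit.
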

\begin{proof}
The generators are commuting permutations. On an orbit $\Omega$, a
stabilizer $K_\omega$ fixes every point: if $k\omega=\omega$, then
$k(h\omega)=h(k\omega)=h\omega$. Thus the faithful quotient
$K/K_\omega$ acts regularly on $\Omega$. In the transitive case, label
coordinates by this abelian quotient. Both shifts, and hence $A,B$, become
linear combinations of its regular translations. Applying the same
relabeling to both qubit blocks gives the claimed realization.
\end{proof}

For multiple orbits, the simultaneous orbit decomposition makes $A$ and
$B$ block diagonal and decomposes the CSS code into components with
abelian regular actions. This statement does not require the components
to be realizations over one common abelian group.

\subsection{A regular action from the nonabelian group \texorpdfstring{$D_3$}{D3}}
Let
\[
 D_3=\langle r,s\mid r^3=s^2=e,\ srs=r^{-1}\rangle,
 \qquad p=3,\quad q=4,
\]
and choose
\begin{equation}\label{eq:D3seeds}
 \boldsymbol a=(e,e,s),\qquad \boldsymbol b=(e,e,e,r).
\end{equation}
The binary shifts have size $72\times72$. Although the corresponding
algebra-valued shifts fail to commute, the binary lifts commute.

\begin{lemma}\label{lem:D3}
The shifts in~\eqref{eq:D3seeds} generate a regular action
$\langle P,Q\rangle\cong C_6\times C_{12}$ on the 72 coordinates.
\end{lemma}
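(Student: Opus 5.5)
The plan is to write $P$ and $Q$ explicitly as permutations of the coordinate set $\Z_3\times\Z_4\times D_3$, compute their orders, and show that $K=\langle P,Q\rangle$ is transitive. Proposition~\ref{prop:orbits} then gives regularity, and a counting argument identifies $K$.

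\textbf{Step 1 (the permutations).} From \eqref{eq:P} and \eqref{eq:regular}, $P$ sends the basis vector indexed by $(i+1,j,h)$ to the one indexed by $(i,j,ha_i)$. From \eqref{eq:Q}, $Q$ sends $(i,j+1,h)$ to $(i,j,b_jh)$. Proposition~\ref{prop:commute} shows that these permutations commute.

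\textbf{Step 2 (orders).} Iterating $P$ three times returns the first index to its starting value. The second index is unchanged. The group coordinate is right-multiplied by the product $a_{i-1}a_{i-2}a_{i-3}$ of the three seeds in some cyclic order. Two of these seeds are $e$ and one is $s$, so $P^3$ is the map $h\mapsto hs$ on every fibre. Hence $P^3$ has order $2$, and since $P^3\neq I$, $P$ has order exactly $6$.

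Similarly, $Q^4$ acts by $h\mapsto b_{j-1}\cdots b_{j-4}h=rh$, because the seed product is $r$ in every cyclic order. Thus $Q^4$ has order $3$ and $Q$ has order $12$.

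\textbf{Step 3 (transitivity).} Start from $(0,0,e)$. The powers $P^u$ with $0\le u<3$ and $Q^v$ with $0\le v<4$ reach every pair $(i,j)$. The elements $P^3$ and $Q^4$ act within a fibre by $h\mapsto r^a h s^b$. These commute with one another, and from $h=e$ they reach every element $r^as^b$ of $D_3$. Since $K$ is abelian, composing these moves shows that the orbit of $(0,0,e)$ is all $72$ coordinates. By the argument in Proposition~\ref{prop:orbits}, a transitive abelian permutation group acts regularly, so $|K|=72$.

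\textbf{Step 4 (identification).} The map $C_6\times C_{12}\to K$ given by $(x,y)\mapsto P^xQ^y$ is a well-defined surjective homomorphism, because $P$ and $Q$ commute and have orders $6$ and $12$. Both groups have order $72$, so the map is an isomorphism.

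The only delicate point is Step 2: the direction conventions and the order of the seed products. These do not affect the result here, because only one seed in each tuple is nontrivial. One must still check that $P^3$ and $Q^4$ really act uniformly on all fibres, and that they jointly generate the full regular-type action on each $D_3$ fibre. That joint generation relies on $r^as^b$ running over all of $D_3$.
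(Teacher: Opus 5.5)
Your proof is correct, but it reaches regularity from the opposite direction to the paper.

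The paper proves \emph{freeness}. If $P^uQ^v$ fixes a coordinate, then $3\mid u$ and $4\mid v$, and the fixed-point equation $r^bhs^a=h$ equates an element of $\langle r\rangle$ with an element of $h\langle s\rangle h^{-1}$. Since these subgroups have coprime orders $3$ and $2$, both elements are trivial. That one computation shows that $(u,v)\mapsto P^uQ^v$ is injective on $\Z_6\times\Z_{12}$ and that the $72$ products act freely. Regularity then follows by counting, without invoking Proposition~\ref{prop:orbits}.

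You instead prove \emph{transitivity}. This needs only the factorization $D_3=\langle r\rangle\langle s\rangle$ at a single base point. Proposition~\ref{prop:orbits}, which says point stabilizers of a transitive abelian permutation group are trivial, then gives $|K|=72$. The isomorphism type follows from a surjection between groups of equal order. The two routes are dual uses of the same fact: $\langle r\rangle\cap\langle s\rangle=\{e\}$ is equivalent to $|\langle r\rangle\langle s\rangle|=6$.

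One small point in Step~2. Knowing $P^6=I$ and $P^3\neq I$ would still allow $P$ to have order $2$, and $Q^4$ having order $3$ would still allow $Q$ to have order $3$ or $6$. Pinning down the exact orders also needs the observation, stated explicitly in the paper, that the index shifts force the orders of $P$ and $Q$ to be multiples of $3$ and $4$. Your Step~4 does not depend on this, because well-definedness of $C_6\times C_{12}\to K$ requires only $P^6=Q^{12}=I$. So the conclusion of the lemma is unaffected.
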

\begin{proof}
In the ordering $(i,j,h)\in\Z_3\times\Z_4\times D_3$,
\[
 P^3=I_{12}\otimes\rho(s),\qquad
 Q^4=I_{12}\otimes\lambda(r).
\]
The cyclic coordinate shifts force the orders of $P$ and $Q$ to be
multiples of three and four, respectively. The displayed identities then
give orders six and twelve.

If $P^uQ^v$ fixes a coordinate, then $u=3a$ and $v=4b$. Its action on
the group coordinate is $h\mapsto r^bhs^a$. A fixed point would imply
$r^b=hs^{-a}h^{-1}$. The two sides lie in subgroups of orders three and
two, respectively, so both are the identity. Thus $u=0\pmod6$ and
$v=0\pmod{12}$. The $6\cdot12=72$ products act freely on 72 coordinates
and hence regularly.
\end{proof}

For comparison with standard BB notation, set $x=Q$ and $y=P$ in the
following displayed formulas; then $x^{12}=y^6=I$. This is a notational
swap, not a change to the argument convention in the BBGA definition.

\subsection{Two known BB codes}
The gross code~\cite{BCG2024} has
\begin{equation}\label{eq:gross}
 A=Q^3+P+P^2,\qquad B=P^3+Q+Q^2.
\end{equation}
Each matrix is the sum of three distinct regular translations, so every
check has weight six. Binary elimination gives
$\rk H_X=\rk H_Z=66$ and hence $k=12$. By Lemma~\ref{lem:D3}, the
code is a coordinate relabeling of the standard gross code, whose
parameters are $[[144,12,12]]$.

The weight-eight example of Symons, Rajput, and Browne~\cite{SRB2025}
is obtained from
\begin{align}
 A&=Q^6P^4+Q^5P^4+Q^3+Q^{11}P^3,\label{eq:coverA}\\
 B&=P^5+Q^8P+Q^5P^5+Q^9P^4.\label{eq:coverB}
\end{align}
Each defining matrix is a sum of four distinct regular translations.
The ranks are both 65, giving $k=14$. These are precisely the published
polynomials on $C_{12}\times C_6$ with $x=Q,y=P$, so the reported distance
14 transfers under the coordinate relabeling. The code therefore has
parameters $[[144,14,14]]$. The distances of these two examples are inherited
from the cited BB codes; their dimensions and check weights are also
verified in the accompanying scripts.

\subsection{A further weight-eight instance}\label{sec:D3candidate}
The same shifts give
\begin{equation}\label{eq:D3third}
 A=P^4+Q^2+Q^3+PQ,\qquad B=P^2+P^3+Q+(PQ)^5.
\end{equation}
Both check matrices have rank 64, so $k=16$, and every check has weight
eight. For an explicit upper bound on distance, order
$D_3=(e,r,r^2,s,rs,r^2s)$ and then order $(i,j,h)$ lexicographically.
Number the full 144 qubits from \emph{one}, with the first block first.
The supports
\begin{align*}
 U_X&=\{4,5,6,7,8,9,16,17,18,19,20,21\},\\
 U_Z&=\{1,4,15,18,27,30,38,41,50,53,61,64\}
\end{align*}
define nontrivial $X$- and $Z$-logical operators, respectively. Thus
$d_X,d_Z\leq12$. The supplement verifies their zero syndromes and their
nonmembership in the corresponding stabilizer spaces. No exhaustive
lower-bound certificate for this instance is included, so we retain only
the verified statement $[[144,16,d]]$ with $d\leq12$. In particular, the
weight-twelve witnesses alone do not prove $d=12$. By Lemma~\ref{lem:D3},
this instance is an abelian realization regardless of its exact distance.

\section{A certified nonabelian example}\label{sec:nonabelian}
We now use sparse group-algebra sums rather than group-element labels.
Here $p=q=1$, so the example is a nonabelian 2BGA code within the BBGA
formulation~\cite{LP2024}.

\subsection{Construction and parameters}
Let $G=A_4\times C_6$, where
\[
 x=(1\ 2\ 3),\qquad y=(1\ 2)(3\ 4)
\]
generate $A_4$, permutation products act from right to left, and $t$ is a
central element of order six. Thus
$x^3=y^2=(xy)^3=t^6=e$, $tx=xt$, and $ty=yt$. In $\F[G]$, set
\begin{align}
 a&=t^5+xt+(xyx)t^4+(yxy)t^2,\label{eq:nonaba}\\
 b&=e+(xyx^2)t^5+(yxy)t^4+(yx^2)t.\label{eq:nonabb}
\end{align}
Take $P=\rho(a)$, $Q=\lambda(b)$, $A=P$, and $B=Q$. Each seed has four
distinct terms. Therefore every row and column of $A,B$ has weight four,
and every row of $H_X,H_Z$ has weight eight. Each qubit belongs to four
$X$ checks and four $Z$ checks.

\begin{proposition}[Computer-assisted parameter certificate]\label{prop:parameters}
The code defined by~\eqref{eq:nonaba}--\eqref{eq:nonabb} has
$\rk H_X=\rk H_Z=64$ and $d_X=d_Z=12$. Its parameters are
$[[144,16,12]]$.
\end{proposition}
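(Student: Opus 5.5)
The certificate splits into a rank computation and a distance computation, both done by machine but organized to exploit the structure. I will first build $\rho(a)$ and $\lambda(b)$ as explicit $72\times72$ binary matrices from the coordinate ordering in Appendix~\ref{app:reproduce}. I will then check the defining properties: each has exactly four ones per row and column, and $\rho(a)\lambda(b)=\lambda(b)\rho(a)$, which is \eqref{eq:LR}. Gaussian elimination over $\F$ then gives $\rk H_X$ and $\rk H_Z$.

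A structural shortcut reduces the second rank to the first. The map $g\mapsto g^{-1}$ is a coordinate permutation $J$ with $J\lambda(z)J=\rho(\bar z)$, where $\bar z$ is the antipode. Hence $H_Z=[B^{\mathsf T}\ A^{\mathsf T}]=[\lambda(\bar b)\ \rho(\bar a)]$, and the rank of this can be cross-checked against the rank of $H_X$. Once both ranks equal 64, \eqref{eq:dimension} gives $k=144-128=16$.

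For the distance I need upper and lower bounds on $d_X$ and on $d_Z$.

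\textbf{Upper bounds.} I will exhibit explicit weight-12 vectors: one in $\ker H_Z\setminus\row H_X$ and one in $\ker H_X\setminus\row H_Z$. Each is checked by computing its syndrome and by testing non-membership in the stabilizer row space. The test is that appending the vector to the row-reduced stabilizer matrix raises the rank.

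\textbf{Lower bounds.} These are the main obstacle. The kernel of $H_Z$ has dimension $144-64=80$, so naive enumeration of its $2^{80}$ elements is impossible. I would instead run an exact minimum-distance search, either of the following:
\begin{itemize}
\item the Brouwer--Zimmermann information-set method on the $80$-dimensional code $\ker H_Z$;
\item a MIP/SAT formulation: minimize $\wt(v)$ subject to $H_Zv=0$ and $\langle v,\ell_i\rangle=1$ for some $Z$-logical representative $\ell_i$ from a chosen basis of $16$ logicals. This is run separately for each of the $2^{16}-1$ logical classes, or equivalently with a parity-aggregated constraint.
\end{itemize}

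The search is certified to show there is no logical of weight $\le 11$. The symmetry group reduces the work: left translations by $G$ act on the first block by $\lambda(g)$ and on the second by $\lambda(g)$, and they commute with $\rho(a)$. Combined with right translations, this lets me fix one support coordinate, which cuts the search by a factor of about 72. For $d_Z$, the same inversion map $J$ together with swapping the two blocks interchanges the roles of the $X$- and $Z$-sectors, with seeds $(\bar b,\bar a)$ in place of $(a,b)$. I would still run the search independently for $Z$, so as not to rely on any identification beyond a permutation that the scripts also verify.

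Combining the bounds gives $d_X=d_Z=12$ and hence $[[144,16,12]]$.
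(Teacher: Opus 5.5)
\textbf{The lower-bound search fails as set up.} Your overall plan is sound: compute both ranks by elimination, give explicit weight-12 witnesses for the upper bounds, and run an exact search for the lower bounds. The problem is the symmetry group you use to speed up the search, which does not exist.

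Left translations $\lambda(g)$ commute with $\rho(a)$ on the first block. On the second block they would also have to commute with $\lambda(b)$, i.e.\ $gb=bg$. The four terms of $b$ carry distinct powers of $t$, so the $A_4$-component of $g$ would have to centralize both the double transposition $xyx^2$ and the 3-cycle $yxy$. That forces it to be trivial. Right translations fail in the same way against $a$. Only the central $\langle t\rangle\cong C_6$ survives.

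This matches the paper's computation $|\Aut(\Gamma)|=12$. Every qubit permutation preserving the code permutes the minimum stabilizer supports of Lemma~\ref{lem:minsupports}, so it embeds in $\Aut(\Gamma)$, whose qubit orbits have size twelve. Hence no code symmetry carries a chosen qubit around a 72-qubit block. A search restricted to supports through one fixed coordinate would miss every logical operator avoiding that coordinate's orbit, and the resulting lower bound would be unsound. At most a factor of twelve is available; the paper uses only the factor six from the free central $C_6$ translation.

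\textbf{Brouwer--Zimmermann needs modification.} Applied as stated to the classical code $\ker H_Z$, it returns that code's minimum weight. That weight is at most $8$, because the weight-eight rows of $H_X$ lie in $\ker H_Z$. You must instead enumerate all codewords up to the target weight and discard those in $\row H_X$.

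Your MIP formulation handles this degeneracy correctly. Sixteen runs suffice, one per basis logical $\ell_i$, since $v\in\ker H_Z$ is nontrivial exactly when $\langle v,\ell_i\rangle=1$ for some $i$.

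\textbf{Minor point on the ranks.} Conjugating $H_Z=[\lambda(\bar b)\ \rho(\bar a)]$ by $J$ gives $[\rho(b)\ \lambda(a)]$. That is the $X$-check matrix of the code with the seeds interchanged, not $H_X$, so it does not reduce $\rk H_Z$ to $\rk H_X$. This is harmless because you compute both ranks anyway.

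\textbf{How the paper does it.} The paper avoids generic distance solvers. The all-ones vector lies in both check row spaces, so both kernels contain only even-weight vectors and only weights at most ten need to be excluded. Every such support is the symmetric difference of two five-element supports. It therefore suffices to sort the syndromes of all five-subsets (about $8\times 10^{7}$ after the $C_6$ reduction) and check that equal syndromes always differ by a stabilizer. This meet-in-the-middle certificate is cheap and does not depend on a solver. A corrected MIP route, without the false symmetry, would also establish $d_X=d_Z=12$, but its validity rests on the solver's optimality proof.
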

\begin{proof}
The construction has $n=2|G|=144$. Exact binary elimination gives the two
ranks and verifies that the all-ones vector belongs to both check row
spaces. Consequently, both parity-check kernels contain only even-weight
vectors. The logical supports in Appendix~\ref{app:reproduce} establish
$d_X,d_Z\leq12$.

For the lower bound, every even-weight support of size at most ten can be
written as the symmetric difference of two five-element supports: split
it equally and pad both halves with the same coordinates outside the
support. Thus it suffices to compare the syndromes of all five-element
supports and show that equal syndromes imply equal residues modulo the
opposite stabilizer row space.

The exhaustive computation uses the free simultaneous central $C_6$
translation. Every five-element support has an orbit of size six, because
an invariant support for a nonidentity translation would have cardinality
divisible by two or three. The number of support orbits is therefore
\[
 \binom{144}{5}/6=80\,168\,088.
\]
The verifier takes one representative from each orbit, minimizes its
syndrome over all six translations, and retains every translation attaining
that minimum. It sorts exact 72-bit syndromes and compares exact
144-bit stabilizer residues within each equal-syndrome class. Both sectors
have $80\,168\,088$ retained entries and $57\,120$ repeated-key entries;
every repeated key has a stabilizer difference. Because the translation
preserves both relevant spaces, this checks all pairs before symmetry
reduction as well. Hence neither sector has a nontrivial logical vector of
weight at most ten. Odd weights are already excluded, proving the lower
bound twelve. The source code and completed output are included in the
supplement.
\end{proof}

\subsection{An intrinsic certificate of inequivalence}
A small automorphism group of a selected Tanner graph does not by itself
rule out code equivalence: another stabilizer generating set could have a
different graph. We instead use the supports of \emph{all} minimum-weight
stabilizers, disregarding their Pauli labels and phases.

\begin{lemma}[Computer-assisted minimum-support classification]\label{lem:minsupports}
For the code in Proposition~\ref{prop:parameters}, the minimum nonidentity
stabilizer weight is eight. The complete set of minimum stabilizer supports
consists of the 72 rows of $H_X$ and the 72 rows of $H_Z$, with no repeated
support between the two collections.
\end{lemma}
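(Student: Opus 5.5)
The plan is to reduce the statement to a finite enumeration inside each CSS sector, using the same symmetry reductions as in the proof of Proposition~\ref{prop:parameters}. A stabilizer of a CSS code is, up to phase, a product $X(u)Z(v)$ with $u\in\row H_X$ and $v\in\row H_Z$, and its support is $\supp u\cup\supp v$. Its weight is therefore at least $\max(\wt u,\wt v)$. So I will first show that every nonzero vector of $\row H_X$ and of $\row H_Z$ has weight at least eight. I will then classify the weight-eight vectors in each row space. Finally I will handle mixed stabilizers with $u,v$ both nonzero.

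\textbf{Step 1: minimum-weight vectors in each row space.} Take $\row H_X$; the $\row H_Z$ case is symmetric under exchanging the roles of $A$ and $B$ with their transposes. The row space is a $64$-dimensional subspace of $\F^{144}$, which is far too large to enumerate directly. I will instead enumerate low-weight codewords by an information-set or Brouwer--Zimmermann style search on the generator matrix consisting of the $72$ rows of $H_X$.

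Two facts make this search cheaper. First, by Proposition~\ref{prop:parameters} the all-ones vector lies in the row space, so all its vectors have even weight; only weights $2,4,6,8$ need examination. Second, the free central $C_6$ translation, together with the full left and right regular symmetries that commute with $\lambda(b)$ and $\rho(a)$, gives the orbit reduction already used in the parameter certificate.

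The concrete criterion is as follows. A vector $u$ of weight $w\le 8$ lies in $\row H_X$ exactly when $u\perp\ker H_X$. Equivalently, $uK^{\mathsf T}=0$ for a basis $K$ of $\ker H_X$, which has dimension $80$. I will enumerate all supports of size $w\le 8$, split as two halves of size at most four as in the earlier proof, and use meet-in-the-middle on the $80$-bit values $uK^{\mathsf T}$. Collisions then give exactly the row-space vectors of weight at most eight.

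The check needed from this computation is twofold. There should be no collisions of weight below eight. At weight eight, the collisions should be exactly the $72$ rows; these rows are distinct because the four terms of each seed are distinct and the translations act regularly.

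\textbf{Step 2: mixed stabilizers.} Suppose $u\ne0\ne v$ and $|\supp u\cup\supp v|=8$. By Step 1, both $u$ and $v$ have weight at least eight, so both have weight exactly eight and $\supp u=\supp v$. Step 1 also says $u$ is a row of $H_X$ and $v$ is a row of $H_Z$. It therefore suffices to compare the $72$ supports of the rows of $H_X$ with the $72$ supports of the rows of $H_Z$ and check that none coincide.

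This comparison is a direct computation. It also yields the second claim of the lemma, that no support is repeated between the two collections, and it shows that $Y$-type mixed minimum stabilizers cannot occur. Combining Steps 1 and 2, the minimum stabilizer weight is eight and the minimum supports are exactly the $144$ listed ones.

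\textbf{Main obstacle.} The expensive part is the completeness of Step 1: certifying that there are no other weight-eight or lighter row-space vectors. This is a computational issue, not a conceptual one. With halves of size four, there are about $\binom{144}{4}\approx1.7\times10^7$ supports per half, and this shrinks further after quotienting by $C_6$. That is comparable to the size-five computation already performed, so I expect it to be feasible with sorted exact $80$-bit keys.

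The conceptual pitfall is easy to miss: the search must use the dual test $u\perp\ker H_X$, not syndromes. Membership in $\row H_X$ is an orthogonality condition against the full kernel, and it must not be confused with the logical-residue test used in the distance proof.
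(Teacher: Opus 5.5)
Your proposal is correct and follows the paper's argument: both use a meet-in-the-middle comparison over all $\sum_{j\le4}\binom{144}{j}=17\,676\,661$ supports of size at most four in each sector, followed by the same union-and-disjointness argument for mixed stabilizers. The only difference is the collision key. The paper uses the $72$-bit syndrome under the opposite check matrix, which enumerates the weight-$\le 8$ vectors of $\ker H_Z\supseteq\row H_X$; these turn out to be exactly the $72$ rows of $H_X$. So your warning that syndromes must not be used is unfounded, although your $80$-bit dual test against a basis of $\ker H_X$ is equally valid. Also, evenness of $\row H_X$ follows from its rows having weight eight, or from $\mathbf 1\in\row H_Z$, not from $\mathbf 1\in\row H_X$.
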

\begin{proof}
Exhaustive syndrome comparison of supports of size at most four enumerates
all kernel vectors of weight at most eight. The number of supports compared,
including the empty support, is
\[
 \sum_{j=0}^4\binom{144}{j}=17\,676\,661.
\]
In $\ker H_Z$, the computation finds exactly 72 nonzero such vectors, all
of weight eight and exactly the rows of $H_X$. Interchanging $X$ and $Z$
gives the analogous statement for $\ker H_X$. The two support collections
are disjoint.

Every Pauli stabilizer has an $X$ component in $\row H_X$ and a $Z$
component in $\row H_Z$, and its support is the union of their supports.
If both components are nonzero and the union has size at most eight, the
classification forces both supports to be identical weight-eight supports.
Their disjointness excludes this possibility. The pure components give
exactly the displayed 144 minimum supports.
\end{proof}

Form a bipartite incidence graph $\Gamma$ with 144 qubit vertices and 144
vertices for the supports in Lemma~\ref{lem:minsupports}. Join a support
to each qubit it contains. Color only the two vertex classes; do not give
$X$ and $Z$ supports different colors. An exact nauty
computation~\cite{MP2014} gives
\begin{equation}\label{eq:aut}
 |\Aut(\Gamma)|=12,
\end{equation}
with twelve qubit orbits of size twelve. The graph and automorphism
generators are included in the supplement.

\begin{theorem}[Inequivalence to ordinary abelian 2BGA codes]\label{thm:inequivalence}
The code in Proposition~\ref{prop:parameters} is inequivalent to every
ordinary abelian BB or 2BGA code of length 144 under qubit permutations,
even when combined with single-qubit Clifford operations.
\end{theorem}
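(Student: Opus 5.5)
The argument compares an invariant that any abelian 2BGA realization would force with the measured value \eqref{eq:aut}. Suppose the code were equivalent, under a qubit permutation $\pi$ composed with local Cliffords, to an ordinary abelian 2BGA code over some abelian group $H$ with $|H|=72$, given by $H_X'=[\,\lambda(a')\ \ \lambda(b')\,]$ and $H_Z'=[\,\lambda(b'^{*})\ \ \lambda(a'^{*})\,]$. Ordinary BB codes are the special case $H=C_m\times C_{m'}$.

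\emph{Step 1: the invariant is preserved.} Local Clifford operations permute Pauli labels on each qubit but never change supports. Qubit permutations carry the stabilizer group to the stabilizer group. Hence the minimum stabilizer weight and the set of all minimum stabilizer supports are carried bijectively to each other by $\pi$, with Pauli labels and phases ignored. It follows that $\pi$ induces an isomorphism of the uncolored incidence graphs $\Gamma\cong\Gamma'$, where $\Gamma'$ is built from the minimum supports of the putative realization. In particular $|\Aut(\Gamma')|=12$.

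\emph{Step 2: the abelian realization forces many automorphisms.} In any 2BGA code over an abelian $H$, the translations $h\in H$ act simultaneously on both qubit blocks by $u\mapsto hu$. These maps commute with $\lambda(a')$ and $\lambda(b')$ because $H$ is abelian. They therefore preserve both row spaces and map the stabilizer group onto itself as a group of supports. Consequently they preserve the set of minimum stabilizer supports and act on $\Gamma'$ as automorphisms. The action on qubit vertices is free, since a translation by $h\neq e$ moves every coordinate. So $H$ embeds in $\Aut(\Gamma')$, giving $|\Aut(\Gamma')|\ge 72$, or at least $72$ dividing the order. This contradicts $|\Aut(\Gamma)|=12$. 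One can also argue via orbits: the qubit orbits of $\Aut(\Gamma')$ have size at least $72$, while Step 1 would make them size $12$.

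\emph{Main obstacle.} The delicate point is Step 1. The realization need not be presented with its minimum supports equal to its check rows, and a local Clifford can mix $X$ and $Z$ so that the image is not CSS in the same labeling. Both problems disappear because the invariant uses \emph{all} minimum-weight stabilizer supports, without $X/Z$ coloring, as guaranteed for our code by Lemma~\ref{lem:minsupports}. We never need to compute the minimum supports of the hypothetical code, only that its stabilizer group is $H$-invariant.

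We also need the translation action to be genuinely an action on the full stabilizer group, including mixed Pauli stabilizers. This holds because the action is a coordinate permutation that preserves both $\row H_X'$ and $\row H_Z'$. The rigorous content therefore reduces to the nauty computation \eqref{eq:aut} and Lemma~\ref{lem:minsupports}, which we take as given.
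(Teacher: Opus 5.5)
Your proposal is correct and follows essentially the same route as the paper. Both arguments use the free simultaneous translation action of the order-72 abelian group $H$, which preserves the stabilizer group and hence the complete set of minimum supports, so $H$ embeds in the automorphism group of the intrinsic support graph. The equivalence transports this graph, since supports are invariant under local Cliffords and relabeled by permutations, which contradicts $|\Aut(\Gamma)|=12$. Your explicit commutation check with $\lambda(h)$ and the orbit-size variant are harmless elaborations of the same idea.
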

\begin{proof}
An ordinary abelian 2BGA realization of length 144 uses the regular
representation of an abelian group $H$ of order 72 on each of its two
qubit blocks. Simultaneous translations by $H$ act faithfully on the
qubits and preserve both stabilizer spaces. They therefore preserve the
complete set of minimum stabilizer supports. Its intrinsic support graph
must contain a subgroup of automorphisms of order 72.

Qubit permutations relabel Pauli supports, and single-qubit Clifford
operations preserve them. An equivalence of the stated type would thus
induce an isomorphism of the intrinsic support graphs and transport this
order-72 subgroup into $\Aut(\Gamma)$. This contradicts~\eqref{eq:aut},
since 72 does not divide 12.
\end{proof}

The theorem concerns the standard square two-block construction from one
abelian group. It makes no assertion about arbitrary rectangular
lifted-product constructions or equivalence under general entangling
Clifford transformations.

\subsection{Parameter comparison}
Table~\ref{tab:comparison} summarizes the certified length-144 examples. The
nonabelian code encodes four more qubits than the gross code at the same
length and distance, but increases the check weight from six to eight. It
does not dominate $[[144,14,14]]$, whose distance is larger. Its parameters
and check weight match a reported double-layer reflection
code~\cite{LXX2026}; equality of these quantities does not establish
code equivalence. No threshold, decoding, or circuit-depth advantage is
inferred from the parameter comparison.

\begin{table}[htbp]
\centering
\caption{Length-144 examples. $W_X,W_Z$ denote check weights in the
specified presentations. The uncertified distance instance
in Section~\ref{sec:D3candidate} is omitted.}\label{tab:comparison}
\begin{tabular}{lcccl}
\toprule
Realization & $k$ & $d$ & $(W_X,W_Z)$ & Abelian status\\
\midrule
$D_3$, Eq.~\eqref{eq:gross} &12&12&$(6,6)$&Known abelian BB\\
$D_3$, Eqs.~\eqref{eq:coverA}--\eqref{eq:coverB}&14&14&$(8,8)$&Known abelian BB\\
$A_4\times C_6$, Eqs.~\eqref{eq:nonaba}--\eqref{eq:nonabb}&16&12&$(8,8)$&Excluded by Theorem~\ref{thm:inequivalence}\\
\bottomrule
\end{tabular}
\end{table}

\section{A weighted-shift code beyond ordinary 2BGA realizations}
\label{sec:weighted18}
The preceding example separates nonabelian 2BGA codes from the ordinary
abelian subclass. We now establish a stronger separation: a nontrivial
weighted-shift BBGA code that admits no ordinary binary 2BGA realization
at all. Here an ordinary 2BGA realization means the full square two-block
construction over one finite group algebra, with length twice the group
order~\cite{LP2024}.

\subsection{Explicit weighted shifts}
Let $G=C_3=\langle r\mid r^3=e\rangle$, take $p=3$, $q=1$, and set
$c=r+r^2\in\F[G]$. Choose
\begin{equation}\label{eq:weighted18seeds}
 \boldsymbol a=(e,e,c),\qquad \boldsymbol b=(c).
\end{equation}
All seeds are nonzero, although $c$ is not a unit. In the group ordering
$(e,r,r^2)$, define
\[
 C=\begin{pmatrix}0&1&1\\1&0&1\\1&1&0\end{pmatrix},\qquad
 I=I_3,\qquad J=I+C.
\]
Both multiplication matrices of $c$ equal $C$, and $C^2=C$. The binary
weighted shifts are therefore
\begin{equation}\label{eq:weighted18PQ}
 P=\begin{pmatrix}0&I&0\\0&0&I\\C&0&0\end{pmatrix},\qquad
 Q=\begin{pmatrix}C&0&0\\0&C&0\\0&0&C\end{pmatrix}.
\end{equation}
They satisfy $P^3=Q$, $Q^2=Q$, and $PQ=QP$. Take
\begin{equation}\label{eq:weighted18poly}
 A=P^2+Q,\qquad B=P^2+PQ+P^2Q.
\end{equation}
Equivalently, the defining matrices are
\begin{equation}\label{eq:weighted18AB}
 A=\begin{pmatrix}C&0&I\\C&C&0\\0&C&C\end{pmatrix},\qquad
 B=\begin{pmatrix}0&C&J\\0&0&C\\C&0&0\end{pmatrix}.
\end{equation}
Equation~\eqref{eq:css} then gives the check matrices. Their orthogonality
follows from $AB=BA$.

\subsection{Parameters and minimum stabilizer supports}
\begin{proposition}[Exact finite verification]\label{prop:weighted18}
The code defined by~\eqref{eq:weighted18seeds}--\eqref{eq:weighted18poly}
has parameters $[[18,4,3]]$. Each check matrix has six rows of weight six
and three rows of weight eight. Its minimum nonidentity stabilizer weight
is six, and there are exactly 38 distinct minimum stabilizer supports.
\end{proposition}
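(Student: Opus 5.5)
The plan is to make the verification as explicit as possible while keeping it finite. Everything reduces to arithmetic in $\F[C_3]$ on the $3\times3$ block structure of~\eqref{eq:weighted18AB}.

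\textbf{Ranks and $k$.} First I compute $H_X=[A\ B]$ and $H_Z=[B^{\mathsf T}\ A^{\mathsf T}]$ block by block, using $C^2=C$, $CJ=JC=0$ and $J^2=J$. The idempotents $C$ and $J=I+C$ split $\F^3$ into a two-dimensional piece ($\operatorname{im}C$) and a one-dimensional piece ($\operatorname{im}J$). On the $J$-component every $C$ becomes $0$ and every $I$ becomes the identity. On the $C$-component every $C$ becomes the identity and every $J$ becomes $0$.

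Projecting $A$ and $B$ onto these two components turns the rank computation into two small computations over $\F$. One is with $3\times3$ matrices on the $J$-part and the other with $3\times3$ matrices over the field $\mathbb F_4\cong C\F[C_3]$ on the $C$-part. This gives $\rk H_X$ and $\rk H_Z$ directly. I expect each rank to be $7$, so that $k=18-14=4$; binary elimination in the supplement confirms this.

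The row weights are read off~\eqref{eq:weighted18AB}. Each $C$ or $J$ block row contributes two or three ones, and each $I$ contributes one. The three rows of the first block row of $B$ (containing $J$) have weight $2+2+1+0+2+1$ and its variants, and this is where the count of six rows of weight six and three of weight eight comes from. I will tabulate this explicitly.

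\textbf{Distance.} For the upper bound I exhibit weight-three $X$- and $Z$-logicals: vectors with zero syndrome that lie outside the opposite row space. The structure suggests looking for them supported on a $J$-type vector (all-ones on a single $C_3$ block) in one qubit block.

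For $d\ge3$, every weight-$\le2$ vector has to be checked: if it is in the relevant kernel, it must lie in the row space. There are only $\binom{18}{1}+\binom{18}{2}=171$ such vectors per sector. So this is a syndrome comparison: distinct nonzero syndromes for all weight-one supports, and equal syndromes only for stabilizer-equivalent supports.

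\textbf{Minimum stabilizer supports.} I will argue as in Lemma~\ref{lem:minsupports}. The row spaces $\row H_X$ and $\row H_Z$ each have $2^7=128$ elements, so I enumerate both completely. This shows the minimum nonzero weight is six and lists the weight-six supports from each sector.

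Mixed stabilizers have support equal to the union of their $X$ and $Z$ supports. So a mixed stabilizer of weight six requires an $X$-support and a $Z$-support of weight at most six whose union has size six. Since nonzero weights are at least six, both parts must be weight-six vectors with identical supports. Collecting the pure supports and removing the coincidences between the two lists yields the distinct supports, and I expect exactly $38$.

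The main obstacle is this last count. It needs a complete, exact enumeration of the weight-six elements of each $128$-element row space, plus careful deduplication of coincident $X$/$Z$ supports. I would carry it out by computer and cross-check it against the $C_3$-translation symmetry (simultaneous $\rho(r)$ on all blocks), which permutes supports in orbits of size one or three.
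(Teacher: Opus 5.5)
Your proposal is correct and matches the paper's proof step for step. The shared steps are: ranks $7$ and $7$; row weights read from the block form; block-constant weight-three witnesses (the paper uses exactly $X_4X_5X_6$ and $Z_1Z_2Z_3$); exclusion of weight $\le 2$ kernel vectors because the columns are distinct and nonzero; complete enumeration of both $128$-element row spaces; and the union-of-supports argument for mixed stabilizers, with the two lists of $19$ weight-six supports turning out disjoint. Your idempotent split $\F[C_3]\cong\F\times\mathbb F_4$ is a pleasant extra that gives $\rk H_X=\rk H_Z=1+2\cdot 3=7$ by hand and shows the $C$-component carries no logical qubits, which explains why block-constant witnesses work. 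One small slip: the first block row of $H_X$ is $(C,0,I,0,C,J)$, with weight $2+0+1+0+2+3=8$, not $2+2+1+0+2+1$, although your final tally of six weight-six and three weight-eight rows is right.
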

\begin{proof}
Binary elimination gives $\rk H_X=\rk H_Z=7$, so $k=18-7-7=4$.
The row weights follow directly from~\eqref{eq:weighted18AB}. Both check
matrices have 18 distinct nonzero columns. Thus neither parity-check kernel
contains a nonzero vector of weight one or two.

For explicit logical witnesses, order the coordinates within each
nine-qubit block as $(i,r^j)$, with $i$ first and $j$ second,
$0\leq i,j<3$. Number all qubits from one, with the first block first.
The operators
\[
 X_4X_5X_6,\qquad Z_1Z_2Z_3
\]
have zero syndromes of their respective types and lie outside the
corresponding stabilizer spaces. Hence $d_X=d_Z=3$. These are separate
distance witnesses, not an anticommuting logical pair.

For each of the two binary stabilizer row spaces, enumeration of its
$2^7=128$ vectors gives the weight enumerator
\begin{equation}\label{eq:weighted18enumerator}
 1+19z^6+45z^8+42z^{10}+18z^{12}+3z^{14}.
\end{equation}
The 19 weight-six supports in the $X$ space and the 19 in the $Z$ space
are disjoint collections. A mixed Pauli stabilizer has support equal to
the union of its component supports. Such a union cannot have size six
unless the two nonzero components have identical weight-six supports,
which the disjointness excludes. Thus the 38 pure supports constitute
all minimum stabilizer supports, and there is no nonidentity stabilizer
of smaller weight. Enumeration of all $2^{14}=16\,384$ stabilizers
independently verifies this classification. The accompanying standard-library
Python program performs these exact finite computations.
\end{proof}

The combined Tanner graph of the displayed $X$ and $Z$ checks is connected,
as also checked by the verifier. In particular, this presentation has no
unused qubits. Connectedness of this presentation alone is not used as an
obstruction to code equivalence.

\subsection{Exclusion of every ordinary binary 2BGA realization}
\begin{theorem}\label{thm:weighted18}
The code in Proposition~\ref{prop:weighted18} is inequivalent to every
ordinary binary 2BGA code of length 18 under qubit permutations combined
with single-qubit Clifford operations.
\end{theorem}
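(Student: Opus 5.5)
Any ordinary binary 2BGA code of length 18 is built over a group $H$ of order 9, so $H\cong C_9$ or $C_3\times C_3$; in either case $H$ is abelian. The plan is to adapt the proof of Theorem~\ref{thm:inequivalence}. Instead of an automorphism-order argument on a graph, we use the divisibility condition mentioned in the introduction.

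\textbf{Step 1: the translation action and its orbit sizes.} In an ordinary 2BGA realization $H_X=[\rho(a')\ \lambda(b')]$, simultaneous right translation by $h\in H$ on both blocks commutes with $\lambda(b')$. It conjugates $\rho(a')$ to $\rho(h^{-1}a'h)=\rho(a')$, since $H$ is abelian. Hence the translations preserve both stabilizer row spaces, and therefore preserve the full stabilizer group. Because each translation is a qubit permutation, it maps stabilizers to stabilizers of equal weight. So $H$ permutes the set of minimum stabilizer supports, and these supports split into $H$-orbits. Each orbit has size dividing $|H|=9$, so each orbit has size $1$, $3$ or $9$.

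\textbf{Step 2: rule out invariant supports.} The action of $H$ on the 18 qubits is free, with two regular orbits. A support fixed by a nonidentity $h$ of order $3$ or $9$ is therefore a union of $\langle h\rangle$-orbits. Its size is then divisible by $3$. The weight six is divisible by 3, so invariance is not excluded automatically and needs care.

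The direct route is to count supports. The 38 minimum supports would have to be partitioned into orbits of sizes in $\{1,3,9\}$. This imposes no restriction by itself, since $38=9\cdot4+1+1$ is possible.

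A stronger invariant is needed. I would use the same qubit-incidence data: every qubit lies in the same number of minimum supports within one $H$-orbit of qubits, and there are at most two qubit orbits, each of size 9. The total incidence count is $38\cdot 6=228$. Transitivity on each block means the incidence number is constant on each of the two orbits, say $m_1$ and $m_2$, with $9m_1+9m_2=228$. But $228$ is not divisible by $9$ ($228=9\cdot25+3$), which is a contradiction. This is the divisibility obstruction the introduction refers to.

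Step 2 therefore reduces to checking that this argument is airtight. The points to confirm are that the minimum supports form an intrinsic, equivalence-invariant set (shown in Proposition~\ref{prop:weighted18}), and that every qubit of the hypothetical realization lies in one of two free $H$-orbits of size 9.

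\textbf{Step 3: transport under the equivalence.} Qubit permutations relabel supports, and single-qubit Cliffords preserve them. So an equivalence carries the 38 supports and their incidence structure bijectively onto the other code. The degree sum $228$ is therefore invariant under the equivalence, which completes the argument.

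\textbf{Main obstacle.} The hardest part is making sure that every "ordinary binary 2BGA code of length 18" really uses a group of order 9 acting regularly on each block. Degenerate choices such as $a'=0$ or $b'=0$ still give a valid realization, and in that case the translation action is still present. Once the 38 supports are known intrinsically, the remaining arithmetic is immediate.
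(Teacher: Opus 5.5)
Your proof is correct, but its decisive counting step differs from the paper's.

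The paper counts orbits of supports. The setwise stabilizer $H_S$ of a six-qubit support acts freely on $S$, so $|H_S|$ divides $\gcd(6,9)=3$. Every orbit of six-qubit supports therefore has size $3$ or $9$, and an invariant collection must have cardinality divisible by $3$, which $38$ is not.

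You instead double count qubit--support incidences. Since $H$ permutes the minimum supports and is transitive on each nine-qubit block, the incidence number is constant on each block. Hence $6\cdot 38=228$ would have to be divisible by $9$, and it is not. Here the two conditions coincide ($9\mid 6m$ iff $3\mid m$), and both use only the intrinsic data of $38$ supports of size six. Your version avoids any analysis of setwise stabilizers; it is the summed-up form of the paper's later remark that $\nu_i=12$ on exactly six qubits. The paper's version yields a cleaner invariant: the number of minimum supports modulo $3$.

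One side claim in your Step 2 is false, although your final argument does not rely on it: orbits of size $1$ are impossible. An $H$-invariant support is a union of $H$-orbits of qubits, each of size $9$, so it cannot have six elements. Consequently the ``direct route'' you set aside already works, with orbit sizes in $\{3,9\}$ and $3\nmid 38$. That is precisely the paper's proof.
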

\begin{proof}
An ordinary length-18 2BGA code uses a group $H$ of order nine. Every group
of order nine is abelian: $H\cong C_9$ or $C_3\times C_3$. Simultaneous
regular translations on the two nine-qubit blocks therefore preserve the
stabilizer code~\cite{LP2024}. This action is free on the qubits.

Consider a six-qubit support $S$, and let $H_S$ be its setwise stabilizer
under translation. The restricted action of $H_S$ on $S$ is free, so
$|H_S|$ divides $|S|=6$. It also divides $|H|=9$, hence
$|H_S|\in\{1,3\}$. Every orbit of six-qubit supports therefore has size
$9/|H_S|\in\{9,3\}$. Any translation-invariant collection of such supports
must have cardinality divisible by three.

The complete collection of minimum stabilizer supports is intrinsic to
the code. Qubit permutations relabel this collection, and local Clifford
operations preserve Pauli supports. An equivalence of the stated type
would consequently transport the free translation action to the
38 minimum supports of Proposition~\ref{prop:weighted18}. Since three
does not divide 38, this is impossible.
\end{proof}

There is also a qubit-incidence form of the obstruction. Let $\nu_i$
count the minimum stabilizer supports containing qubit $i$. Enumeration
gives $\nu_i=12$ on $\{4,5,6,13,14,15\}$ and $\nu_i=13$ on the other
twelve qubits. Thus there is an intrinsic six-element set of qubits,
whereas every invariant qubit set under a free order-nine translation
action must have cardinality divisible by nine.

\begin{remark}
The seed group in this example is abelian. Its additional expressive power
comes from the nonuniform singular shift labels, not from group
noncommutativity. Since every ordinary 2BGA code is included in the BBGA
formulation at $p=q=1$, Theorem~\ref{thm:weighted18} establishes strict
containment of the ordinary binary 2BGA class in the unrestricted
weighted-shift BBGA class, up to the specified equivalence. The conclusion
does not exclude general matrix-valued lifted-product descriptions,
constructions with added or removed qubits, or equivalence under arbitrary
entangling Clifford circuits. It makes no claim of literature novelty or
parameter optimality.
\end{remark}

\section{Conclusion}\label{sec:conclusion}
The weighted-shift BBGA formulation uses the established commutation of
left and right multiplication to produce CSS codes from bivariate
polynomials over arbitrary finite group algebras. Explicit weight bounds
identify when the resulting construction remains sparse. Group-element
seeds lead to abelian permutation actions; the $D_3$ examples illustrate
how a nonabelian description can recover ordinary BB codes.

Sparse group-algebra sums also give instances outside the ordinary
abelian 2BGA equivalence class. The $A_4\times C_6$ example has certified
parameters $[[144,16,12]]$, weight-eight checks, and an intrinsic
stabilizer-support obstruction to an abelian realization. It is itself a
member of the existing nonabelian 2BGA family. The useful distinction is
therefore between the seed description, the binary action, and the
stabilizer code up to the specified equivalence.

The $[[18,4,3]]$ code over $\F[C_3]$ establishes a different separation:
nonuniform weighted shifts can yield a code outside every ordinary binary
2BGA realization at the same length. Its check weights are six and eight,
and the 38 minimum stabilizer supports provide a direct obstruction to
the translations required by any order-nine group realization. Thus an
abelian seed algebra need not imply an ordinary abelian 2BGA code when
nonuniform group-algebra-valued shift labels are allowed.

Further work should characterize the weighted-shift instances outside the
ordinary 2BGA class, beyond the explicit separation proved here, and seek
larger examples with stronger rate--distance tradeoffs under fixed
check-weight constraints. Determining the exact distance of
Eq.~\eqref{eq:D3third} remains a separate finite-length question.
Asymptotic scaling, decoding, and fault-tolerant measurement costs also
require further analysis.

\section{Acknowledgments}

In the preparation of this work, the author made use of GPT-5.6 Sol and GPT-6 Astra for grammar and clarity refinement, as well as to aid in computing the code parameters presented in the examples. These interactions helped improve the quality of this paper. After using this tool, the author carefully reviewed and edited the content as necessary and takes full responsibility for the final content of this paper.

\appendix
\section{Reproducibility and coordinate conventions}\label{app:reproduce}
The accompanying supplement contains the reconstruction scripts,
exhaustive enumeration programs, matrix data, and verification logs.
The nonabelian computations use exact binary arithmetic, Python with
NumPy, C++17, and nauty through pynauty (tested version 2.8.8.1).

For the $A_4\times C_6$ code, order the even permutations of
$(0,1,2,3)$ lexicographically by their image tuples. If $h$ is the
$j$th permutation, starting at $j=0$, assign $ht^u$ the index $6j+u$,
where $0\leq u<6$. Number the qubits by
$72b+6j+u$ for block $b\in\{0,1\}$, starting at \emph{zero}.
This convention differs from the one-based $D_3$ witnesses in
Section~\ref{sec:D3candidate}.
The group-algebra support indices are
\[
 \supp(a)=\{5,25,34,56\},\qquad
 \supp(b)=\{0,71,58,61\}.
\]
Nontrivial weight-twelve logical supports are
\begin{align*}
 U_X&=\{10,11,18,43,61,95,97,120,123,124,125,136\},\\
 U_Z&=\{84,91,92,105,111,112,114,119,122,131,136,139\}.
\end{align*}
The reconstruction script checks $U_X\in\ker H_Z\setminus\row H_X$
and $U_Z\in\ker H_X\setminus\row H_Z$ directly.

From the supplement directory, run
\begin{verbatim}
python nonabelian/verify.py
python nonabelian/verify.py --exhaustive
python d3/verify.py
\end{verbatim}
The first command reconstructs the nonabelian matrices and checks
commutation, ranks, weights, logical witnesses, symmetry, and the
support-graph automorphism group. Only the second command independently
recomputes the distance lower bound and the completeness of the minimum
stabilizer-support collection. It requires a C++17 compiler and
approximately 1.4 GB of memory. The third command checks the dimensions
and weights of all three $D_3$ instances and the weight-twelve witnesses
for the third; it does not certify a distance lower bound for that instance.

The SHA-256 digests of the nonabelian check matrices, serialized as
row-major unsigned eight-bit entries, are
\begin{quote}\small
$H_X$:\\
\nolinkurl{cc050e8ec7b466b3b9d8b4ee26e0be0645c4243abe865b4596c20192ffa8d914}\\[3pt]
$H_Z$:\\
\nolinkurl{6c0c76415529174c6a0cb9e139e033e09a5b5044935c6a032acd0a091f44ccd3}.
\end{quote}
The graph certificate is meaningful as a code invariant only together
with the exhaustive minimum-support classification. No distance lower
bound is inferred merely from a randomized search failing to find a
lighter logical operator.

\subsection{The length-18 weighted-shift example}
The directory \texttt{weighted18/} contains a self-contained verifier,
the complete binary matrices and minimum-support list in
\texttt{certificate.json}, and the completed verification output.
From the supplement directory, run
\begin{verbatim}
python weighted18/verify.py
\end{verbatim}
This computation uses only the Python standard library. It checks
commutation, CSS orthogonality, ranks, row weights, distinct nonzero
syndrome columns, and the weight-three logical witnesses. It enumerates
both stabilizer row spaces and all $16\,384$ Pauli stabilizers, verifying
Eq.~\eqref{eq:weighted18enumerator}, the 38 distinct minimum supports,
and the incidence counts used in Section~\ref{sec:weighted18}.
No randomized distance search, external solver, or graph-automorphism
computation is needed. Coordinates are one-based, as specified in the
proof of Proposition~\ref{prop:weighted18}; the option
\texttt{--write-data} regenerates the certificate file.
\end{footnotesize}
\end{document}